\def\01{\{0,1\}}
\newcommand{\ket}[1]{|#1\rangle}
\newcommand{\bra}[1]{\langle#1|}
\newcommand{\ketbra}[2]{|#1\rangle\langle#2|}
\newcommand{\norm}[1]{{\left\|{#1}\right\|}}
\newcommand{\inp}[2]{\langle{#1}|{#2}\rangle} 
\newcommand{\inpc}[2]{\langle{#1},{#2}\rangle} 
\newcommand{\Tr}{\mbox{\rm Tr}}
\newcommand{\rank}{\mbox{\rm rank}}
\newcommand{\I}{\mathbb{I}}
\newcommand{\C}{\mathbb{C}}
\newcommand{\supp}{\operatorname{supp}}
\newcommand{\diag}{\operatorname{diag}}
\newcommand{\AND}{\wedge}
\newtheorem{definition}{Definition}
\newtheorem{theorem}{Theorem}
\newtheorem{corollary}[theorem]{Corollary}
\newtheorem{proposition}{Proposition}
\title{Kochen-Specker Sets and the Rank-$1$ Quantum Chromatic Number}
\author{
Giannicola Scarpa
\thanks{CWI, Science Park 123, 1098 XG Amsterdam, the Netherlands. Supported by a Vidi grant from NWO. \mbox{Email: g.scarpa@cwi.nl}} 
\and 
Simone Severini
\thanks{Department of Computer Science, and Department of Physics \& Astronomy, University College London, WC1E 6BT London, United Kingdom. Supported by a Newton International Fellowship. Email: simoseve@gmail.com}
}
\begin{document}

\maketitle

\begin{abstract}
The quantum chromatic number of a graph $G$ is sandwiched
between its chromatic number and its clique number, which are well known NP-hard quantities. 
We restrict our attention to the rank-1 quantum chromatic number $\chi_q^{(1)}(G)$, 
which upper bounds the quantum chromatic number, but is defined under stronger constraints.
We study its relation with the chromatic number $\chi(G)$ and the
minimum dimension of orthogonal representations $\xi(G)$. 
It is known that $\xi(G) \leq \chi_q^{(1)}(G) \leq \chi(G)$. 
We answer three open questions about these relations: we give a necessary and sufficient condition to have 
$\xi(G) = \chi_q^{(1)}(G)$, we exhibit a class of graphs 
such that $\xi(G) < \chi_q^{(1)}(G)$, and we give a necessary and sufficient condition to have $\chi_q^{(1)}(G) < \chi(G)$.
Our main tools are Kochen-Specker sets, collections of vectors with a traditionally important role in the study 
of noncontextuality of physical theories, and more recently in the quantification of quantum zero-error capacities.
Finally, as a corollary of our results and a result by Avis, Hasegawa, Kikuchi, and Sasaki 
on the quantum chromatic number, we give a family of Kochen-Specker sets of growing dimension.
\end{abstract}

\section{Introduction}

The chromatic number is an important and well studied graph parameter.
The notion of  \emph{quantum} chromatic number was described in its generality
by Cameron \emph{et al. }\cite{qchrom} in 2007, but it has been studied in the
context of quantum nonlocality since the late
'90s (see the seminal work by Brassard, Cleve, and Tapp  \cite{bct:simulating}; see
also the recent survey by Galliard, Wolf and Tapp \cite{gall10} and the references therein).

The value of this notion is at least twofold: first, it has a natural
use as a tool for isolating the difference between quantum and
classical behavior, at least regarding certain aspects of entanglement-assisted 
communication; second, it sheds more light on many combinatorial parameters 
between well-known NP-hard quantities like the clique and the chromatic number 
(\emph{e.g.}, the Lov\'{a}sz $\theta $-function, the dimension of orthogonal representations, \emph{etc.}). 
The first examples that show a separation between the quantum
chromatic number and the chromatic number are based on 
constructions originally considered by Frankl and R\"{o}dl \cite{fr87} 
and Buhrman, Cleve and Widgerson \cite{bcw98}, as was pointed out by Avis \emph{et al.} \cite{Avis_QPW}.
Such a separation is particularly interesting because it is responsible for the
advantages of two-party quantum protocols which in the current literature
have been called \emph{pseudo-telepathy games} (see \cite{gall02}); moreover,
it shows a genuinely quantum behavior. 
Such considerations suggest the importance of
determining how the structure of a graph characterizes its quantum chromatic
number. For this reason, here we address the relation between 
the quantum chromatic number and other common graph-theoretic parameters. 

Specifically, we focus on the rank-1 quantum chromatic number. This quantity
is obtained by using only rank-1 measurement operators in the 
protocol which motivates the general definition (see Section \ref{sec:qchrom} below). 
Essentially, it is the minimum dimension of unitary matrices assigned to the vertices of a graph,
such that matrices of adjacent vertices satisfy a natural orthogonality
condition. There is an obvious correspondence with orthogonal representation of
graphs  as introduced by Lov\'{a}sz \cite{lovasz:shannon}
in the context of zero-error information theory. (See also \cite{lovasz_geom_repr}. A more recent equivalent
expression is vector coloring \cite{vector_coloring}.) 

In the present paper, we establish a precise
connection between the minimum dimension of orthogonal representations $\xi(G)$, the chromatic number $\chi(G)$ 
and the rank-1 quantum chromatic number $\chi_q^{(1)}(G)$, for all graphs $G$. 
It is known that $\xi(G) \leq \chi_q^{(1)}(G) \leq \chi(G)$. We answer three open questions.
First, we prove that the rank-1 quantum
chromatic number is equal to the minimum dimension of the orthogonal
representation of a particular Cartesian product graph. 
Second, we exhibit graphs where $\xi(G) < \chi_q^{(1)}(G)$, thereby solving an open problem
stated in \cite{qchrom}. The proof technique is not based on a specific example,
but on a general result which connects rank-1 quantum chromatic number and
Kochen-Specker sets. These are collections of vectors originally used to
prove the inadequacy of local hidden variable theories to model quantum
mechanical behavior deterministically \cite{Kochen-Specker,KSvectors}. 
Third, using a similar connection we exhibit graphs where $\chi_q^{(1)}(G) < \chi(G)$.
We show that there is a 
quantum advantage in playing the graph coloring game described in Section 
\ref{sec:qchrom}, whenever the strategy adopted by the two players gives
a so-called \emph{weak Kochen-Specker set}. The observation is valuable
because until now the only examples of the separation were some
orthogonality graphs, specifically the Hadamard graphs considered by Avis
\emph{et al.} \cite{Avis_QPW} and introduced in \cite{fr87}, and an isolated example
with 18 vertices \cite{qchrom}. 

Avis \emph{et al.} \cite{Avis_QPW} proved that for all Hadamard graphs on $n=4m$
vertices with $m \geq 3$ it is possible to win the graph coloring game with probability
one and a quantum advantage. Combining the result with our discussion,
we can say that all such graphs provide
proofs of the Kochen-Specker theorem.

The remainder of the paper is organized as follows. Section \ref{sec:qchrom} is
devoted to preliminary definitions. 
Section \ref{sec:relation} contains our main results,
reporting facts about the relation between rank-1 quantum
chromatic number, orthogonal rank and chromatic number. The proof
techniques are based on linear algebra. Section \ref{sec:concl} delineates some
venues for further research. We conclude the paper with a list of open
problems, for example investigating the quantum chromatic number in
the context of zero-error information theory.

\section{Quantum Chromatic Number} \label{sec:qchrom}


In this section we define the concept of quantum chromatic number of a graph. 
For the sake of completeness and to fix some useful facts, 
we present a comprehensive statement about its basic properties. 
This is done by extending and completing some observations contained in \cite{qchrom}.
We give particular attention to a more constrained notion, the so-called rank-1 quantum chromatic number. 

We assume familiarity with the basics of quantum information theory. The reader can find a good introduction 
in \cite[Chapter 2]{NielsenChuang}. 
All graphs considered in this paper are \emph{simple graphs}, 
i.e. finite, unweighted, undirected graphs containing no self-loops or multiple edges.
For a graph $G$, we denote its vertex set with $V(G)$ and its edge set with $E(G)$, unless otherwise specified. 
A \emph{proper $c$-coloring} of a graph is an assignment of $c$ colors to the vertices of the graph
such that every two adjacent vertices have different colors. 
The \emph{chromatic number} of a graph $G$, denoted by $\chi(G)$, is the minimum number of colors $c$ such that 
there exists a proper $c$-coloring of $G$.

Define the \emph{coloring game} for $G=(V,E)$ as follows.
Two players, Alice and Bob, claim that they have a proper $c$-coloring for $G$. 
A referee wants to test this claim with a one-round game, so he forbids communication between the players 
and separately asks Alice the color $\alpha$ for the vertex $v$ and Bob the color $\beta$ for the vertex $w$. 
The players win the game if the following holds:
\begin{itemize}
\item If $v=w$, then $\alpha=\beta$
\item If $(v,w)\in E$, then $\alpha \neq \beta$
\item $\alpha,\beta \in \{ 1,\dots,c \} $.
\end{itemize}

A classical strategy consists of two deterministic functions $c_A: V \rightarrow [c]$ for Alice and 
\mbox{$c_B: V \rightarrow [c]$} for Bob.
A little thought will show that to win with probability 1, we must have $c_A = c_B$ 
(to satisfy the first condition) and that $c_A$ must be a valid $c$-coloring of the graph 
(for the second and third conditions). 
Therefore, classical players cannot win the game with probability $1$ using less than $\chi(G)$ colors. 

A quantum strategy for the coloring game uses an entangled state $\ket{\psi}$ of local dimension $d$ and 
two families of POVMs: for all $v\in V$, Alice has $\{E_{v\alpha}\}_{\alpha=1,\dots,c}$ 
and Bob has $\{F_{v\beta}\}_{\beta=1,\dots,c}$.
According to her input $v$, Alice applies the corresponding POVM $\{E_{v\alpha}\}_{\alpha=1,\dots,c}$ 
to her part of the entangled state and outputs the outcome $\alpha$. Bob acts similarly and outputs $\beta$.
The requirements for the game translate into the following \emph{consistency conditions}.
Alice and Bob win the coloring game with certainty, using a quantum strategy as described above, 
if and only if
\begin{equation}\forall v\in V, \forall \alpha \neq \beta, \ \bra{\psi} E_{v\alpha} \otimes F_{v\beta} \ket{\psi} = 0 \label{eq:consistency1}\end{equation}
\begin{equation} \forall (v,w)\in E, \forall \alpha, \ \bra{\psi} E_{v\alpha} \otimes F_{w\alpha} \ket{\psi} = 0 \label{eq:consistency2}.\end{equation}
In this case, we call the strategy a \emph{winning strategy} or a \emph{quantum $c$-coloring of $G$}.
Note that we do not bound the dimension of the entangled state or the rank of the measurement operators, 
we only care about the \textit{number} of measurement operators needed to win the game with certainty. 
We are now ready to give our central definition.

\begin{definition} For all graphs $G$, the \textbf{quantum chromatic number} $\chi_q(G)$
 is the minimum number $c$ such that there exists a quantum $c$-coloring of $G$.
\end{definition}

We now show that without loss of generality a quantum $c$-coloring has a clean and simple structure, 
that we denote as quantum $c$-coloring in \emph{normal form}.
The next proposition is a collection of statements from \cite{qchrom}, expanded and rearranged,
that are useful to direct our discussion. For all complex matrices $A$, let $\overline A$ denote the complex conjugate of $A$,
i.e. the matrix obtained from $A$ by taking the complex conjugate of each entry.

\begin{proposition} \label{wlog_strategies}
If $G$ has a quantum $c$-coloring, then there exists a quantum $c$-coloring of $G$ in \emph{normal form}, with the following properties:
\begin{enumerate}
 \item All POVMs are projective measurements with $c$ projectors of rank $r$.
 \item The state $\ket{\psi}$ is the maximally entangled state of local dimension $rc$.
 \item Alice's projectors are related to Bob's as follows: for all $v,\alpha$, $E_{v\alpha} = \overline{F_{v\alpha}}$.
 \item The consistency conditions \eqref{eq:consistency1} and  \eqref{eq:consistency2} can be expressed as the single condition:
    \begin{equation} \forall (v,w) \in E, \forall \alpha \in [c], \ \inpc{E_{v\alpha}}{E_{w\alpha}} = 0. \label{eq:consistency_new}\end{equation}
\end{enumerate}

\end{proposition}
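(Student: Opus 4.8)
The plan is to take an arbitrary quantum $c$-coloring and push it, step by step, into the claimed normal form; the conceptual content is concentrated in forcing the shared state to be maximally entangled, and the rest is linear algebra. First I would make all measurements projective by Naimark's dilation: tensor each party's space with an ancilla in a fixed pure state, replace each POVM by a projective measurement on the enlarged space, and replace $\ket\psi$ by $\ket\psi$ tensored with the two ancillas. Conditions \eqref{eq:consistency1} and \eqref{eq:consistency2} are preserved, so we may assume from now on that all $E_{v\alpha},F_{v\alpha}$ are projectors (the projective half of item~1; equal ranks are dealt with at the very end).

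Next, for item~2, I would show that Alice's reduced density operator $\rho_A$ commutes with every $E_{v\alpha}$ and every $\overline{F_{v\alpha}}$, and then restrict to one of its eigenspaces. Fixing Bob's Schmidt basis to match Alice's, one has the ricochet identity $\bra\psi E\otimes F\ket\psi=\Tr\!\big(\sqrt{\rho_A}\,E\,\sqrt{\rho_A}\,\overline{F}\big)$; summing \eqref{eq:consistency1} over $\beta\neq\alpha$ and using $\sum_\alpha E_{v\alpha}=\sum_\beta F_{v\beta}=\I$ gives $\sum_\alpha\Tr\!\big(\sqrt{\rho_A}\,E_{v\alpha}\,\sqrt{\rho_A}\,\overline{F_{v\alpha}}\big)=1$. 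Since $\sqrt{\rho_A}\,\overline{F_{v\alpha}}\,\sqrt{\rho_A}\preceq\rho_A$, each summand is at most $\Tr(E_{v\alpha}\rho_A)$, and these upper bounds also sum to $1$; hence every inequality is an equality, and the fact that $\Tr(AB)=0$ with $A,B\succeq0$ forces $AB=0$ turns this tightness into the range inclusions $\sqrt{\rho_A}(\operatorname{range}E_{v\alpha})\subseteq\operatorname{range}\overline{F_{v\alpha}}$ and $\sqrt{\rho_A}(\operatorname{range}\overline{F_{v\alpha}})\subseteq\operatorname{range}E_{v\alpha}$. Composing the two shows $\rho_A$ leaves $\operatorname{range}E_{v\alpha}$ and $\operatorname{range}\overline{F_{v\alpha}}$ invariant, i.e.\ commutes with these projectors. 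Now take a nonzero eigenvalue $\lambda$ of $\rho_A$ with eigenprojection $\Pi$. Since $\Pi$ commutes with all measurement operators, the compressions $\Pi E_{v\alpha}\Pi$ (and likewise Bob's $\overline\Pi\,F_{v\alpha}\,\overline\Pi$) are again projective measurements, now on $\Pi\C^{d}$; the compressed state is proportional to $(\Pi\otimes\overline\Pi)\ket\psi$ and is maximally entangled there because $\rho_A$ acts as $\lambda\Pi$ on that block; and because the operators commute with $\Pi$, the left-hand sides of \eqref{eq:consistency1}--\eqref{eq:consistency2} split as sums of nonnegative terms over the eigenblocks of $\rho_A$, so they still vanish on our block. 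We henceforth work with this maximally entangled strategy, of local dimension $n$, say.

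For items~3 and~4, the ricochet identity now reads $\bra\psi E\otimes F\ket\psi=\tfrac1n\Tr(E\,\overline F)$. Summing \eqref{eq:consistency1} once more gives $\sum_\alpha\Tr(E_{v\alpha}\overline{F_{v\alpha}})=n$; since $\Tr(E_{v\alpha}\overline{F_{v\alpha}})\le\rank E_{v\alpha}$ and $\sum_\alpha\rank E_{v\alpha}=n$, each trace equals $\rank E_{v\alpha}$, which for projectors forces $\operatorname{range}E_{v\alpha}\subseteq\operatorname{range}\overline{F_{v\alpha}}$ and, after a dimension count over $\alpha$, the equality $E_{v\alpha}=\overline{F_{v\alpha}}$ (item~3). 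Substituting $F_{v\alpha}=\overline{E_{v\alpha}}$, condition \eqref{eq:consistency1} becomes $\Tr(E_{v\alpha}E_{v\beta})=0$ for $\alpha\neq\beta$ (automatic for orthogonal projectors of one measurement) and \eqref{eq:consistency2} becomes $\inpc{E_{v\alpha}}{E_{w\alpha}}=\Tr(E_{v\alpha}E_{w\alpha})=0$ for $(v,w)\in E$; conversely, any projectors with $\sum_\alpha E_{v\alpha}=\I$ obeying \eqref{eq:consistency_new} yield a winning strategy via the maximally entangled state with Alice using $\{E_{v\alpha}\}$ and Bob using $\{\overline{E_{v\alpha}}\}$, which gives item~4.

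Finally, to make all ranks equal I would pad by cyclic reshuffling. On $\C^{cn}$ set $\widehat E_{v\alpha}=\bigoplus_{j=0}^{c-1}E_{v,(\alpha+j)\bmod c}$, keep the maximally entangled state of dimension $cn$, and let Bob use $\{\overline{\widehat E_{v\alpha}}\}$. Then $\sum_\alpha\widehat E_{v\alpha}=\I_{cn}$, each $\widehat E_{v\alpha}$ has rank $\sum_\beta\rank E_{v\beta}=n=:r$ independently of $v$ and $\alpha$, and $\inpc{\widehat E_{v\alpha}}{\widehat E_{w\alpha}}=\sum_j\inpc{E_{v,(\alpha+j)\bmod c}}{E_{w,(\alpha+j)\bmod c}}=0$ on every edge, completing item~1. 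I expect the main obstacle to be the middle step, the reduction to a maximally entangled state: the positivity input $\sqrt{\rho_A}\,\overline F\,\sqrt{\rho_A}\preceq\rho_A$ together with tightness of the trace bound is the real content, whereas the passage to an eigenspace of $\rho_A$, the rank equalization, and the rewriting of the consistency conditions are routine.
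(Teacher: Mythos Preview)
Your argument is correct, and your overall architecture differs from the paper's in two places. First, to obtain projective measurements the paper does not invoke Naimark; instead it replaces each POVM element by the projector onto $\supp\big(\Tr_B(I\otimes F_{v\alpha}\ketbra{\psi}{\psi})\big)$, which already bakes in the link between Alice's and Bob's operators. Second, and more substantively, the paper reverses the order of your middle steps: after restricting to the support of $\rho_A$ to get full Schmidt rank, it proves $E_{v\alpha}=\overline{F_{v\alpha}}$ \emph{first} (via the support construction), uses that identity to deduce $[\rho_A,E_{v\alpha}]=0$, and then observes that once $E=\overline F$ the winning conditions no longer depend on the Schmidt coefficients, so one may simply set them all equal. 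You instead establish the commutation $[\rho_A,E_{v\alpha}]=0$ directly from tightness of the trace inequality, project to a single eigenspace of $\rho_A$ to force maximal entanglement, and only afterwards read off $E_{v\alpha}=\overline{F_{v\alpha}}$ from the rank count. Your route is a bit more modular and sidesteps the need for a separate full-Schmidt-rank truncation (the eigenspace projection does that work), while the paper's route reaches $E=\overline F$ sooner and then makes the replacement of $\ket\psi$ by the maximally entangled state a one-line remark. The rank-equalization by cyclic reshuffling and the rewriting of the consistency conditions are the same in both.
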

\begin{proof} 
We start with a generic winning strategy consisting of entangled state $\ket{\psi'''}$, and
POVMs $\{E'''_{v\alpha}\}_{v\in V, \alpha \in [c]}$ for Alice 
and $\{F'''_{w\beta}\}_{w \in V, \beta \in [c]}$ for Bob. 
Then, we will gradually construct an equivalent strategy with the desired properties.
We will prove the statements in a few steps. Each bullet in the following list is a small statement
that is proved right after. At the end of the steps, we will have the final strategy in normal form.
The number of prime symbols of the notation for the entangled state and the 
POVM elements will reduce as soon as we get close to the final strategy, consisting of
$\ket{\psi}$ and $\{E_{v\alpha}\}_{v\in V, \alpha \in [c]}$, $\{F_{w\beta}\}_{w \in V, \beta \in [c]}$.

\begin{itemize}
\item \emph{The entangled state has full Schmidt rank.} 

Start with the entangled state $\ket{\psi'''}$, with local dimension $d'$. 
Consider the Schmidt decomposition $\ket{\psi'''} = \sum_{i= 0}^{d'-1} \lambda_i \ket{i}\ket{i}$, 
where without loss of generality $\{\ket{i}\}_{i\in \{0,\dots,d'-1\}}$ is the computational basis. 
Say there are $d$ nonzero $\lambda_i$.
Then we define the new entangled state as $\ket{\psi''} = \sum_{i: \lambda_i \neq 0} \lambda_i \ket{i}\ket{i}$, and
restrict the measurement operators to the respective supports of the reduced states as follows.
Consider the projector $P = \sum_{i: \lambda_i \neq 0}\ketbra{i}{i}$.
Then for all the POVM elements of Alice define $E''_{v\alpha} = PE'''_{v\alpha}P$, and do the same for Bob's POVM elements.
These restricted POVMs are valid measurements on $\ket{\psi''}$, and they still form a valid quantum coloring:
$\sum_{\alpha} E''_{v\alpha} = \I $ (on the $d$-dimensional subspace on which $P$ projects) and
$$\bra{\psi''} E''_{v\alpha} \otimes F''_{v\beta} \ket{\psi''} = \bra{\psi'''} P E'''_{v\alpha} P \otimes P F'''_{v\beta} P \ket{\psi'''} 
= \bra{\psi'''} E'''_{v\alpha} \otimes F'''_{v\beta} \ket{\psi'''}. $$
We have that $\ket{\psi''}$ has full Schmidt rank $d$ and together with the new POVMs is a winning strategy.

\item \emph{All POVM elements are projectors.} \\
With some abuse of notation, we identify a projector with the support of the space on which it projects. 
We denote the support of an operator $A$ by $\supp(A)$. Define the inner product $\inpc{A}{B} \equiv \Tr(A^\dagger B)$.

It follows from consistency condition \eqref{eq:consistency1} that
$$ \forall v, \alpha, \sum_{\beta \neq \alpha} \inpc{E''_{v\alpha}}{\Tr_B(I \otimes F''_{v\beta} \ketbra{\psi''}{\psi''} )} = 0,$$
but then we must have
$$ \forall v, \alpha, \inpc{E''_{v\alpha}}{\Tr_B(I \otimes F''_{v\alpha} \ketbra{\psi''}{\psi''} )} = 1.$$
Therefore, for all $v$ and $\alpha$ we can replace $E''_{v\alpha}$ with
\mbox{$E'_{v\alpha} = \supp(\Tr_B(I \otimes F''_{v\alpha} \ketbra{\psi''}{\psi''} ))$}, 
without loss of generality. A similar replacement can be done for Bob's POVM elements.

\item\emph {We have  for all $v, \alpha, \ E'_{v\alpha} = \overline{F'_{v\alpha}}$.} \\
Call $\rho = \Tr_B(\ketbra{\psi''}{\psi''}) = \sum_i \lambda^2_i \ketbra{i}{i} = \Tr_A(\ketbra{\psi''}{\psi''})$. 
Then 
$$E'_{v\alpha} = \supp (\Tr_B(\I \otimes F'_{v\alpha} \ketbra{\psi''}{\psi''})) = 
\supp (\sqrt{\rho} \overline{F'_{v\alpha}} \sqrt{\rho}).$$

Now, since all measurements are projective measurements, $\sum_\alpha E'_{v\alpha} = \I$ and 
$$F'_{v\alpha}F'_{v\beta} = 0 \Rightarrow \sqrt{\rho} E'_{v\alpha} \rho E'_{v\beta} \sqrt{\rho} = 0 
\Rightarrow  E'_{v\alpha} \rho E'_{v\beta} = 0.$$ 
Hence, we have:
$$ \rho = \I\rho\I = \sum_\alpha E'_{v\alpha} \rho \sum_\beta E'_{v\beta} =
\sum_{\alpha,\beta} E'_{v\alpha} \rho E'_{v\beta} = \sum_\alpha E'_{v\alpha} \rho E'_{v\alpha}.$$
This last fact implies that $\rho$ commutes with all operators 
(to see this, use the fact that  $(E'_{v\alpha})^2 = E'_{v\alpha}$). 
Hence we have, using the fact that $\ket{\psi''}$ has full Schmidt rank $d$,
\begin{equation} \label{eq:E_is_F}
E'_{v\alpha} = \supp(\sqrt{\rho} \overline{F'_{v\alpha}} \sqrt\rho) = 
\supp (\overline{F'_{v\alpha}}\rho) = \overline{F'_{v\alpha}}.
\end{equation}
                                                            
\item \emph{The state $\ket{\psi''}$ can be replaced by the maximally entangled state.} \\
We have just proven in \eqref{eq:E_is_F} that the winning strategies do not depend on the values 
of the Schmidt coefficients $\{\lambda_i\}$ of $\ket{\psi''}$, as long as they are nonzero. 
Thus we can set for all $i, \lambda_i = 1/\sqrt{d}$ and define  
$\ket{\psi'} = \frac{1}{\sqrt d} \sum_{i= 0}^{d-1}  \ket{i}\ket{i}$.

\item \emph{All projectors can be of the same rank $r$ and the maximally entangled state can have local dimension $rc$.} \\
We can extend the entangled state to $\ket{\psi} = \ket{\psi'} \otimes \ket{\phi_c} $ and then define 
new projectors for Alice $ E_{v\alpha} = \sum_{i=0}^{c-1} E'_{v,\alpha+i (\operatorname{mod} c)} \otimes \ketbra{i}{i}$ 
(and similarly for Bob). 
All have rank $r=d$ and act on the new state of local dimension $rc$. 
It is easy to see that the new projectors still satisfy the consistency conditions.

\item \emph{We can express the consistency conditions \eqref{eq:consistency1} and \eqref{eq:consistency2} 
just in terms of Alice's projectors as:
$\forall (v,w) \in E$ and $\forall \alpha$, $\inpc{E_{v\alpha}}{E_{w\alpha}} = 0$.}

We have that $\ket{\psi}$ is the maximally entangled state with local dimension $d$. 
It follows that for all $v,\alpha$ and $\beta$, 
$\Tr(E_{v\alpha} \otimes F_{w\beta} \ketbra{\psi}{\psi}) = 
\frac{1}{d} \Tr(E_{v\alpha} F_{w\beta}).$
We also have that for all $v$ and $\alpha$, $E_{v\alpha} = \overline{F_{v\alpha}}$. 
Then $\frac{1}{d} \Tr(E_{v\alpha} F_{w\alpha}) = 0$ if and only if $\inpc{E_{v\alpha}}{E_{w\alpha}} = 0$, and we can write 
the consistency conditions as wanted. 

\end{itemize}

Starting from any quantum $c$-coloring of $G$, we have constructed a quantum $c$-coloring in normal form, consisting of
$\ket{\psi}, \{E_{v\alpha}\}_{v\in V, \alpha \in [c]}, \{F_{w\beta}\}_{w \in V, \beta \in [c]}.$

\end{proof}

It is natural to distinguish between different types of quantum chromatic number according to the rank of the POVM elements used in the 
strategies of Alice and Bob.

\begin{definition} 
The \textbf{rank-$r$ quantum chromatic number} $\chi_q^{(r)}(G)$ of $G$
is the minimum number of colors $c$ such that $G$ has a quantum $c$-coloring consisting of 
projectors of rank $r$ and a maximally entangled state of local dimension $rc$.
\end{definition}
We can see that $\chi_q^{(r)}(G) \leq \chi_q^{(s)}(G)$ if $r\geq s$ .
To see this, use the procedure described in Proposition \ref{wlog_strategies} for increasing the local dimension 
of the entangled state and the rank of the projectors, without changing their number.
It follows that 
\begin{equation} \label{eq:rankrchrom}
\chi_q(G) = \min_r\{\chi_q^{(r)}(G) \}. 
\end{equation}

In this paper we restrict our attention to the rank-$1$ quantum chromatic number $\chi_q^{(1)}(G)$.
It follows from \eqref{eq:rankrchrom} that the rank-$1$ quantum chromatic number is an upper bound to the 
quantum chromatic number.
In a rank-$1$ quantum $c$-coloring, we have that the maximally entangled state has local dimension $c$ 
and that the rank-$1$ projectors for each vertex $v$ can be seen as outer products 
$\ketbra{a_{v\alpha}}{a_{v\alpha}}_{\alpha\in[c]}$ of an orthonormal basis $\{\ket{a_{v\alpha}}\}_{\alpha\in[c]}$.
Then the consistency condition \eqref{eq:consistency_new} becomes 
\begin{equation} 
\forall (v,w) \in E, \forall \alpha \in [c], \ \inp{a_{v\alpha}}{a_{w\alpha}} = 0. 
\label{eq:consistency_new_rank1}
\end{equation}
As explained in \cite{qchrom}, a rank-$1$ quantum $c$-coloring of $G$ induces a 
\textbf{matrix representation} of $G$, which is a map $\Phi: V \rightarrow \C^{c\times c}$ 
such that for all $(v,w)\in E$, $ \diag(\Phi(v)^\dagger \Phi(w)) = 0$. This is obtained as follows. 
For all vertices $v \in V$ consider the unitary matrix $U_v$ mapping the computational basis 
$\{\ket{i}\}_{i\in [c]} $ to $\{\ket{a_{v\alpha}}\}_{\alpha\in [c]}$. This is a $c\times c$ matrix and 
because of condition \eqref{eq:consistency_new_rank1}, if $(v,w)$ is an edge then 
the diagonal entries of $U_v^\dagger U_w$ are zero. 

\section{Rank-$1$ Quantum Chromatic Number, Orthogonal Rank and Chromatic Number} 
\label{sec:relation}

This section contains our main results, about the relation between rank-$1$ quantum chromatic number, 
orthogonal rank and chromatic number. 

A $c$-dimensional orthogonal representation of $G=(V,E)$ is a map $\phi: V \rightarrow \C^c$ 
such that for all $(v,w)\in E$, $ \inp{\phi(v)}{\phi(w)} = 0$. 
The \emph{orthogonal rank} of a graph $G$, denoted by $\xi(G)$, is defined as the minimum $c$ such that there exist an 
orthogonal representation of $G$ in $\C^c$. 
The results in \cite[Proposition 7]{qchrom}, gives the following: 
\begin{proposition} For all graphs $G$, 
\begin{equation*} \xi(G) \leq \chi_q^{(1)}(G) \leq \chi(G). \end{equation*}
\end{proposition}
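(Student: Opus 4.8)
The plan is to prove both inequalities directly, since each amounts to unpacking the definitions together with one explicit construction; there is no serious technical difficulty here.

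\emph{For the upper bound $\chi_q^{(1)}(G) \le \chi(G)$:} starting from a proper $c$-coloring $\gamma : V \to [c]$, I would exhibit a rank-$1$ quantum $c$-coloring explicitly. For each vertex $v$ let $U_v$ be the cyclic shift permutation matrix $U_v\ket{i} = \ket{i + \gamma(v) \bmod c}$, and set $\ket{a_{v\alpha}} = U_v\ket{\alpha}$; for each $v$ these form an orthonormal basis, so they define a rank-$1$ strategy. For an edge $(v,w)$ we have $\gamma(v) \neq \gamma(w)$, hence for every $\alpha$, $\inp{a_{v\alpha}}{a_{w\alpha}} = \bra{\alpha} U_v^\dagger U_w \ket{\alpha} = \inp{\alpha + \gamma(v)}{\alpha + \gamma(w)} = 0$ with indices taken mod $c$, which is exactly the consistency condition \eqref{eq:consistency_new_rank1}. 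So $G$ has a rank-$1$ quantum $\chi(G)$-coloring.

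\emph{For the lower bound $\xi(G) \le \chi_q^{(1)}(G)$:} set $c = \chi_q^{(1)}(G)$ and take a rank-$1$ quantum $c$-coloring; by the discussion preceding the proposition it consists of orthonormal bases $\{\ket{a_{v\alpha}}\}_{\alpha \in [c]}$, one per vertex, with $\inp{a_{v\alpha}}{a_{w\alpha}} = 0$ for every edge $(v,w)$ and every $\alpha$. Fix a single color, say $\alpha = 1$, and define $\phi : V \to \C^c$ by $\phi(v) = \ket{a_{v1}}$. Each $\phi(v)$ is a unit vector, hence nonzero, and for every edge $(v,w)$ we have $\inp{\phi(v)}{\phi(w)} = \inp{a_{v1}}{a_{w1}} = 0$, so $\phi$ is a $c$-dimensional orthogonal representation of $G$ and $\xi(G) \le c$.

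The only points requiring any care — and the closest thing to an ``obstacle'' — are bookkeeping: an orthogonal representation must assign nonzero vectors, which holds automatically since the $\ket{a_{v\alpha}}$ are unit vectors, and it is irrelevant which of the $c$ color indices is singled out in the second part. One could alternatively phrase the upper bound in the matrix-representation language of Section \ref{sec:qchrom} by sending a color class to a cyclic shift of the identity, but the vector formulation above is the most direct.
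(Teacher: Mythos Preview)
Your proof is correct. The paper itself does not give a proof of this proposition but simply cites \cite[Proposition 7]{qchrom}; however, both of your constructions appear elsewhere in the paper in exactly the form you use --- the cyclic-shift construction turning a classical coloring into a rank-$1$ quantum coloring is used in the $\Leftarrow$ direction of Theorem~\ref{thm:chiq_chi}, and the observation that fixing one column of the matrix representation yields an orthogonal representation is stated at the beginning of Section~3.2 --- so your argument is precisely the one the paper has in mind.
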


Our main results answer some questions about the relation between these three quantities, 
that were left open in \cite{qchrom}.
For all graphs $G$, we give a necessary and sufficient condition for $\xi(G) = \chi_q^{(1)}(G)$, using a relation
between the rank-$1$ quantum chromatic number and the orthogonal representation of a particular Cartesian product.
Then, using the properties of Kochen-Specker sets in two different ways, 
we first give a class of graphs for which the rank-$1$ quantum chromatic number is strictly greater 
than the orthogonal rank. Later, for all graphs $G$, we give a necessary and sufficient condition for $\chi_q^{(1)}(G) < \chi(G)$.

\subsection{Equality between rank-$1$ quantum chromatic number and orthogonal rank} \label{sec:eq_chiq1_orth}

For all pairs of graphs $G$ and $H$, define their Cartesian product $G\square H$ as follows.
The vertex set $V(G\square H) = V(G) \times V(H)$ is the Cartesian product of the vertex sets of $G$ and $H$. 
We can therefore identify each vertex in $V(G\square H)$ with a pair of vertices from the two original graphs. 
There is an edge in $E(G\square H)$ between vertices $(v,i)$ and $(w,j)$ if either $v=w \AND (i,j)\in E(H)$ or $(v,w)\in E(G) \AND i=j$.

The following proposition will help us to characterize the graphs for which there is equality 
between orthogonal rank and the rank-$1$ quantum chromatic number.  
Let $K_c$ be the complete graph on $c$ vertices.
\begin{proposition} \label{thm:chi_xi} For all graphs $G$, 
$$ \chi_q^{(1)}(G) = \min \{c : \xi(G\square K_c) = c \}.$$
\end{proposition}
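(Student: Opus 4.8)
The plan is to show that a $c$-dimensional orthogonal representation of $G\square K_c$ is, after trivial normalization, the same object as a rank-$1$ quantum $c$-coloring of $G$, and then to combine this equivalence with the trivial lower bound $\xi(G\square K_c)\ge c$.

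First I would record that $\xi(G\square K_c)\ge c$ for every graph $G$. Fixing any vertex $v_0\in V(G)$, the subgraph of $G\square K_c$ induced by the fiber $\{v_0\}\times V(K_c)$ is a copy of $K_c$, and $\xi(K_c)=c$ since $\C^d$ contains at most $d$ pairwise orthogonal nonzero vectors while the standard basis realizes dimension $c$; orthogonal rank is monotone under taking subgraphs. Consequently the condition ``$\xi(G\square K_c)=c$'' is equivalent to ``$G\square K_c$ admits an orthogonal representation in $\C^c$''.

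Next comes the correspondence. Given a rank-$1$ quantum $c$-coloring of $G$, i.e.\ orthonormal bases $\{\ket{a_{v\alpha}}\}_{\alpha\in[c]}$ of $\C^c$ with $\inp{a_{v\alpha}}{a_{w\alpha}}=0$ for all $(v,w)\in E(G)$ and $\alpha\in[c]$ as in \eqref{eq:consistency_new_rank1}, define $\phi(v,\alpha)=\ket{a_{v\alpha}}\in\C^c\setminus\{0\}$. The two edge types of $G\square K_c$ are $(v,\alpha)\sim(v,\beta)$ with $\alpha\neq\beta$, respected because $\{\ket{a_{v\cdot}}\}$ is an orthonormal basis, and $(v,\alpha)\sim(w,\alpha)$ with $(v,w)\in E(G)$, respected by \eqref{eq:consistency_new_rank1}; so $\phi$ is a $c$-dimensional orthogonal representation of $G\square K_c$. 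Conversely, from any orthogonal representation $\phi\colon V(G\square K_c)\to\C^c$, the vectors $\phi(v,1),\dots,\phi(v,c)$ are pairwise orthogonal (from the $K_c$-fiber edges) and nonzero, hence a basis of $\C^c$; normalizing them yields, for each $v$, an orthonormal basis of $\C^c$, and the $E(G)$-edges of $G\square K_c$ translate back to exactly condition \eqref{eq:consistency_new_rank1}. Thus $G\square K_c$ has an orthogonal representation in $\C^c$ if and only if $G$ has a rank-$1$ quantum $c$-coloring, i.e.\ if and only if $\chi_q^{(1)}(G)\le c$.

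Putting these together gives $\{c:\xi(G\square K_c)=c\}=\{c:\chi_q^{(1)}(G)\le c\}$, whose minimum is $\chi_q^{(1)}(G)$ by the definition of $\chi_q^{(1)}$. I do not expect a real obstacle here; the one point that deserves a little care is to use the lower bound $\xi(G\square K_c)\ge c$ so that the condition ``$=c$'' can be replaced by the existential ``$\le c$'', and to note that the argument needs no monotonicity of $\chi_q^{(1)}(G)$ as a function of the number of colors.
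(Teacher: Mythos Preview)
Your argument is correct and follows essentially the same route as the paper: establish a bijection between $c$-dimensional orthogonal representations of $G\square K_c$ and rank-$1$ quantum $c$-colorings of $G$, then combine with the fiber lower bound $\xi(G\square K_c)\ge c$. One small wording slip: the equivalence ``$G$ has a rank-$1$ quantum $c$-coloring $\Leftrightarrow \chi_q^{(1)}(G)\le c$'' (your ``i.e.'') and hence the displayed set equality $\{c:\xi(G\square K_c)=c\}=\{c:\chi_q^{(1)}(G)\le c\}$ would require the monotonicity you explicitly say you don't need; what you actually prove is $\{c:\xi(G\square K_c)=c\}=\{c: G \text{ has a rank-}1 \text{ quantum } c\text{-coloring}\}$, whose minimum is $\chi_q^{(1)}(G)$ by definition, and that is all the proposition requires.
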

\begin{proof}
We first prove that we can map any orthogonal representation
in $\C^c$ of $G' = G\square K_c$ to a matrix representation in $\C^{c\times c}$ of $G$, and vice versa.

Let $\{1,\dots,c\}$ be the vertex set of $K_c$.  
The vertex set of $G'$ is $V(G') = V(G) \times \{1,\dots,c\}$. 
There is an edge in $E(G')$ between vertices $(v,i)$ and $(w,j)$ if either $v=w \AND i\neq j$ or $(v,w)\in E(G) \AND i=j$.
Thus an orthogonal representation $\{a_{(v,i)}\}_{(v,i)\in V(G')} $ of $G'$ can be mapped to a matrix representation of $G$ as follows:
for all $v\in V(G),$ let the $i$-th column of $U_v$ be $a_{(v,i)}/\norm{a_{(v,i)}}$. 
It is easy to check that this is a valid matrix representation in $\C^{c\times c}$ for $G$. 
Similarly we can map matrix representations of $G$ to orthogonal representations of $G'$. 

We now prove the main statement. 
Let $G$ be a graph with $\chi_q^{(1)}(G) = d$, then there exists an orthogonal representation of $G\square K_d$ 
in $d$ dimensions. We also know that $\xi(G\square K_d) \geq d$, because there exist subgraphs of $G\square K_d$ isomorphic to $K_d$ and $\xi(K_d)=d$.
Hence we have $\xi(G\square K_d) = \chi_q^{(1)}(G) = d$.
Now suppose that $\min \{c : \xi(G\square K_c) = c \} = d' < d$. Then there exists an orthogonal representation of $G\square K_{d'}$ in $\C^{d'}$.
We can map such orthogonal representation to a matrix representation for $G$ in $\C^{d'\times d'}$. 
But then $\chi_q^{(1)}(G) = d'$, contradicting the assumption.
Therefore we have $ \chi_q^{(1)}(G) = d = \min \{c : \xi(G\square K_c) = c \}$.
\end{proof}

We are now able to prove the following theorem.

\begin{theorem} \label{thm:chiq_chi_xi} 
For all graphs $G$, 
\begin{equation*}
\chi_q^{(1)}(G) = \xi(G)  \Longleftrightarrow  \xi(G \square K_{\xi(G)}) = \xi(G).  
\end{equation*}
\end{theorem}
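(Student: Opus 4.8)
The plan is to derive the theorem directly from Proposition~\ref{thm:chi_xi}, which already states that $\chi_q^{(1)}(G) = \min\{c : \xi(G\square K_c) = c\}$. Write $k = \xi(G)$ for brevity. First I would establish the easy direction: if $\chi_q^{(1)}(G) = \xi(G) = k$, then $k$ lies in the set $\{c : \xi(G\square K_c) = c\}$ (indeed it is the minimum of that set by the proposition), so in particular $\xi(G\square K_k) = k$, which is exactly the right-hand side.

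For the converse, suppose $\xi(G\square K_k) = k$. Then $k$ belongs to the set $\{c : \xi(G\square K_c) = c\}$, so $\chi_q^{(1)}(G) = \min\{c : \xi(G\square K_c) = c\} \le k = \xi(G)$. But the chain $\xi(G) \le \chi_q^{(1)}(G)$ from the preceding proposition gives the reverse inequality, hence $\chi_q^{(1)}(G) = \xi(G)$. So both directions reduce to combining Proposition~\ref{thm:chi_xi} with the sandwich $\xi(G)\le\chi_q^{(1)}(G)\le\chi(G)$.

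The one point that needs a word of care — and what I expect to be the only real content beyond bookkeeping — is making sure the equivalence is genuinely two-way, i.e.\ that $\xi(G\square K_k) = k$ forces $k$ to actually be the \emph{minimum} of the relevant set rather than just an element of it; this is handled by the lower bound $\chi_q^{(1)}(G)\ge\xi(G)=k$, which pins the minimum from below. It is also worth noting in passing that $\xi(G\square K_c)\ge c$ always holds because $G\square K_c$ contains a copy of $K_c$ (take any fixed vertex of $G$ together with all of $K_c$), so the set $\{c : \xi(G\square K_c)=c\}$ is precisely the set of $c$ for which the trivial lower bound is tight; this observation was already used in the proof of Proposition~\ref{thm:chi_xi} and I would simply invoke it. Overall the proof is short: it is essentially a one-line rephrasing of Proposition~\ref{thm:chi_xi} specialized to $c = \xi(G)$, using the known sandwich inequality to upgrade ``$k$ is in the set'' to ``$k$ is the minimum of the set.''
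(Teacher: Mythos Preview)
Your proposal is correct and follows essentially the same approach as the paper: both derive the equivalence directly from Proposition~\ref{thm:chi_xi} together with the lower bound $\chi_q^{(1)}(G)\ge\xi(G)$. The only cosmetic difference is that the paper re-derives this lower bound inside the proof via the subgraph inequality $\xi(G\square K_c)\ge\max\{\xi(G),c\}$, whereas you invoke the already-stated sandwich $\xi(G)\le\chi_q^{(1)}(G)$; both are equivalent and equally valid.
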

\begin{proof}

It is easy to see that for all pairs of graphs $G,H$ we have 
$\xi(G\square H) \geq \max\{\xi(G), \xi(H)\}$ because there exist subgraphs 
of $G\square H$ isomorphic to $G$ and subgraphs of $G\square H$ isomorphic to $H$.
We also have that for all $c$, $\xi(K_c) = c$.
Hence, we have that $\xi(G\square K_c) \geq \max\{ \xi(G), c \}$.
Using this and Proposition \ref{thm:chi_xi}, we observe that 
$$ \chi_q^{(1)}(G) = \min \{c : \xi(G\square K_c) = c \} \geq \xi(G),$$ 
with equality if and only if $\xi(G \square K_{\xi(G)}) = \xi(G)$. 
\end{proof}

On the basis of Proposition \ref{thm:chi_xi} and following \cite{Hogben07} we can upper bound the rank-$1$ quantum chromatic number, 
in terms of a positive-semidefinite rank.
Let $S_n$ denote the set of $n\times n$ real symmetric matrices. Then for $A \in S_n$, the graph $\mathcal{G}(A)= (V,E)$ is the graph
with vertex set $V=\{1,\dots,n\}$ and edge set $E=\{ (i,j) : A_{ij} \neq 0 \} $.
The set of positive-semidefinite matrices of the graph $G$ is $$\mathcal{S}_+(G) = \{ A\in S_n : A \succeq 0, \ \mathcal{G}(A) = G \},$$ 
and the \emph{positive-semidefinite minimum rank} of $G$ is $$\mbox{mr}_+(G) = \min \{ \rank(A): A \in \mathcal{S}_+(G) \}. $$
From \cite[Observation 1.2]{Hogben07} we have $\xi(G) \leq \mbox{mr}_+( \overline G)$, and from Proposition \ref{thm:chi_xi} we have:
\begin{equation*}
\chi_q^{(1)}(G) \leq \min \{c : \mbox{mr}_+( \overline{G\square K_c}) = c \}.
\end{equation*}
This observation may be useful for future work about the complexity of computing the quantum chromatic number (see Section \ref{sec:concl}).

\subsection{Separation between rank-$1$ quantum chromatic number and orthogonal rank using KS sets}

We know that for all graphs $G$, we have $\xi(G) \leq \chi_q^{(1)}(G)$. It is easy to see that, given a matrix representation of $G$ in $\C^{n\times n}$,
one can obtain an orthogonal representation of $G$ in $\C^n$ (take the first row of each matrix).
We now exhibit graphs with rank-$1$ quantum chromatic number \emph{strictly larger} than the orthogonal rank.
These graphs are known as finite proofs for the Kochen-Specker theorem \cite{Kochen-Specker}.

Informally, the KS theorem states the nonexistence of a map $f: \C^3 \rightarrow \{0,1\} $ such that
for all orthonormal bases $b \subseteq \C^3$, $\sum_{u\in b} f(u) = 1$.

There are also examples of finite sets for which the KS theorem is valid, called KS sets.
\begin{definition}
A Kochen-Specker set in $\C^n$ is a set $S\subseteq \C^n$ such that there is no function \mbox{$f: \C^n \rightarrow \{0,1\}$} satisfying that
for all orthonormal bases $b \subseteq S$, $\sum_{u\in b} f(u) = 1$.
\end{definition}
There are examples of small subsets of $\C^3$ that are KS sets.
The first set of 117 vectors was given by Kochen and Specker in \cite{Kochen-Specker}, 
while the current smallest set in $\C^3$ has 31 vertices \cite{peres}.
We now prove the following theorem using such sets.

\begin{theorem}
There are graphs $G$ such that $\xi(G) < \chi_q^{(1)}(G)$.
\end{theorem}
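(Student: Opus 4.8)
The plan is to build a graph $G$ out of a Kochen-Specker set in $\C^3$ so that $\xi(G) = 3$ while $\chi_q^{(1)}(G) \geq 4$. First I would take a KS set $S \subseteq \C^3$ (for instance the 31-vector set of Peres, or the Kochen-Specker 117-vector set) and let $G$ be the orthogonality graph on $S$: the vertices are the vectors in $S$, with an edge between $u$ and $v$ whenever $\inp{u}{v} = 0$. By construction the identity map $u \mapsto u$ is a $3$-dimensional orthogonal representation of $G$, so $\xi(G) \leq 3$; and since $S$ contains orthonormal bases (triples of mutually orthogonal vectors), $G$ contains triangles, hence $\xi(G) \geq 3$, giving $\xi(G) = 3$.

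The heart of the argument is to show $\chi_q^{(1)}(G) \geq 4$, i.e. that $G$ has no rank-$1$ quantum $3$-coloring. Suppose for contradiction it did. By the discussion following the definition of rank-$1$ quantum chromatic number (equation \eqref{eq:consistency_new_rank1}), such a coloring assigns to each vertex $v$ an orthonormal basis $\{\ket{a_{v1}}, \ket{a_{v2}}, \ket{a_{v3}}\}$ of $\C^3$ such that $\inp{a_{v\alpha}}{a_{w\alpha}} = 0$ whenever $(v,w) \in E(G)$, i.e. whenever $u_v \perp u_w$ as vectors in $S$. The idea is to use this coloring to manufacture the forbidden $\{0,1\}$-valuation on $S$. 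For this I would need to relate the basis assigned to a vertex $v$ back to the vector $u_v$ itself — the natural move is to consider, for each orthonormal basis (triangle) $\{u, u', u''\}$ in $S$, how the three assigned bases interact. The key observation to establish is that along any such basis, the colorings are forced to be "consistent" in a way that lets one read off exactly one marked vector per basis, which is precisely what a KS set forbids.

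Concretely, the main technical step will be: from the matrix representation $\Phi$ induced by the coloring (where $\Phi(v) = U_v$ is unitary and $\diag(U_v^\dagger U_w) = 0$ for edges), derive that the map $v \mapsto$ (the column index $\alpha$ such that $U_v e_\alpha$ has some fixed relationship to $u_v$) descends to a well-defined function on the orthogonality structure. I expect the cleanest route is to show that a rank-$1$ quantum $3$-coloring of the orthogonality graph of $S$ would yield a noncontextual value assignment: define $f$ by marking, in each orthonormal triple, the unique vector whose assigned color "agrees" with a globally fixed reference; the consistency conditions force this to be well-defined and to sum to $1$ on every basis contained in $S$, contradicting the KS property. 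The main obstacle is making this reduction rigorous — in particular, ruling out the possibility that the quantum strategy genuinely exploits a higher-dimensional entangled state to evade the $\{0,1\}$-assignment. To handle this one uses Proposition \ref{wlog_strategies} and the structure of rank-$1$ colorings to reduce everything to the $c$-dimensional (here $3$-dimensional) bases $\{\ket{a_{v\alpha}}\}$, so that no extra dimensions are available, and then argue that the orthogonality constraints among these bases, pulled back through the orthogonality graph of $S$, replicate exactly the combinatorial obstruction that defines a Kochen-Specker set. Once that reduction is in place, the contradiction is immediate, so $\chi_q^{(1)}(G) \geq 4 > 3 = \xi(G)$, proving the theorem.
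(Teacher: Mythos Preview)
Your construction and the verification that $\xi(G_S)=3$ are exactly what the paper does. The gap is in the second half. You propose to rule out a rank-$1$ quantum $3$-coloring directly, by extracting from the bases $\{\ket{a_{v\alpha}}\}$ a $\{0,1\}$-valuation on $S$ that sums to $1$ on every orthonormal triple. But nothing in the data ties the basis $\{\ket{a_{v1}},\ket{a_{v2}},\ket{a_{v3}}\}$ assigned to a vertex $v$ to the vector $u_v\in S$ that \emph{is} $v$; the consistency condition \eqref{eq:consistency_new_rank1} only relates the bases of adjacent vertices to each other, not to the underlying KS vectors. So phrases like ``the column index $\alpha$ such that $U_ve_\alpha$ has some fixed relationship to $u_v$'' or ``the unique vector whose assigned color agrees with a globally fixed reference'' do not pick out anything well-defined, and the argument as written does not go through. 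You yourself flag this as ``the main obstacle,'' and indeed it is the whole content of the theorem.

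The paper sidesteps this entirely. It first shows, by the obvious argument, that $G_S$ has no \emph{classical} $3$-coloring: a proper $3$-coloring would give $f(u)=1$ iff $u$ has color $1$, and every orthonormal basis in $S$ is a triangle, forcing $\sum_{u\in b}f(u)=1$, contradicting the KS property. It then invokes a nontrivial external fact, \cite[Proposition~11]{qchrom}: for every graph $G$, $\chi_q^{(1)}(G)=3$ if and only if $\chi(G)=3$. Since $\chi(G_S)>3$, this immediately gives $\chi_q^{(1)}(G_S)>3$. The step you are missing is precisely this $c=3$ collapse of the quantum and classical chromatic numbers; without it (or an independent proof of it), there is no mechanism to convert the unitary data $\{U_v\}$ into a $\{0,1\}$-valuation on $S$.
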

\begin{proof}

Let $S \subseteq \C^3$ be a KS set. 
We exhibit a graph $G_{S}$ such that $\xi(G_{S})=3$ but $\chi_q^{(1)}(G_{S})>3$.
The vertices of $G_{S}$ are all the vectors in $S$, and there is an edge between orthogonal vectors.
The graph $G_{S}$ has obviously an orthogonal representation of dimension $3$. We show now that it is not $3$-colorable. 
Suppose we are able to 3-color the graph, and let $f$ be a function that maps a vector in $S$ to $1$ if it has color $1$, and to zero otherwise.
Every orthonormal basis $b \subseteq S$ is a clique in $G_{S}$, so we have $\sum_{u\in b} f(u) = 1$, contradicting the assumption that $S$ is a KS set.
In \cite[Proposition 11]{qchrom} it is proven that for all graphs $G$, $\chi_q^{(1)}(G)=3$ 
if and only if $\chi(G)=3$, so we conclude that $\chi_q^{(1)}(G_{KS})>3$.
\end{proof}

Orthogonality graphs of KS sets are not the only ones that exhibit a separation between 
rank-1 quantum chromatic number and chromatic number.
We know about the existence of a small graph with rank-1 quantum chromatic number and
chromatic number equal to 4, but orthogonal rank 3, based on \cite[Proposition 11]{qchrom}. 
It is the orthogonality graph of a set of 13 vectors of dimension 3 \cite{13vertices}.
This set is not a Kochen-Specker set, as there are no KS sets smaller than 18 vectors \cite{searchKS}.

\subsection{Separation between rank-$1$ quantum chromatic number and chromatic number using weak KS sets}

There is an interesting relation between KS sets and pseudo-telepathy games.
These are coordination protocols between separated players, where quantum strategies always succeed while classical
strategies have probability of success strictly smaller than $1$.
Renner and Wolf \cite{KS-PT} prove that every KS set induces a pseudo-telepathy game, and some pseudo-telepathy games induce KS sets. 
Using weak Kochen-Specker sets as defined below, we give a necessary and sufficient condition to have $\chi_q^{(1)}(G) < \chi(G)$ for a graph $G$.

\begin{definition}
A weak Kochen-Specker set in $\C^n$ is a set $S\subseteq \C^n$ such that if there exist a function $f: \C^n \rightarrow \{0,1\} $ satisfying that
for all orthonormal bases $b \subseteq S$, $\sum_{u\in b} f(u) = 1$, then there exist orthogonal unit vectors $u,v\in S$ such that $f(u)=f(v)=1$.
\end{definition}
Note that in the previous definition the vectors $u$ and $v$ such that $f(u)=f(v)=1$ must be in two distinct orthonormal bases.
Obviously, every KS set is also a weak KS set. 

We now prove the following:

\begin{theorem} \label{thm:chiq_chi}
 For all graphs $G$, we have that  $c=\chi_q^{(1)}(G) < \chi(G)$ if and only if for all optimal rank-$1$ strategies for the quantum coloring game 
$\{\ket{a_{v\alpha}}: v\in V, \alpha \in [c]\}$ is a weak KS set.
\end{theorem}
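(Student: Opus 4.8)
The plan is to establish both directions of the equivalence by translating a rank-$1$ quantum $c$-coloring into a family of orthonormal bases and using the coloring-game consistency condition \eqref{eq:consistency_new_rank1}. Throughout, fix $c = \chi_q^{(1)}(G)$ and let $\{\ket{a_{v\alpha}} : v\in V, \alpha\in[c]\}$ be an optimal rank-$1$ strategy, so for each $v$ the set $\{\ket{a_{v\alpha}}\}_{\alpha\in[c]}$ is an orthonormal basis of $\C^c$, and $\inp{a_{v\alpha}}{a_{w\alpha}}=0$ whenever $(v,w)\in E$. The informal dictionary is: a proper $c$-coloring of $G$ corresponds to picking, for each vertex $v$, one distinguished color so that adjacent vertices get distinct ones; and $\{\ket{a_{v\alpha}}\}$ fails to be a weak KS set exactly when there is a ``$\{0,1\}$-labelling hitting each of our bases exactly once'' that never marks two orthogonal vectors — which is precisely the combinatorial data of a proper $c$-coloring sitting inside the strategy.

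For the direction ($\Leftarrow$): suppose $c < \chi(G)$ fails, i.e.\ $c = \chi_q^{(1)}(G) = \chi(G)$; I will show some optimal rank-$1$ strategy yields a set that is \emph{not} a weak KS set. Take any proper $c$-coloring $\kappa: V \to [c]$ of $G$, and build the canonical rank-$1$ strategy where $\ket{a_{v\alpha}} = \ket{\alpha}$ (the computational basis) for every $v$ — adjacency forces $\kappa(v)\neq\kappa(w)$, but since all bases are the \emph{same} standard basis, condition \eqref{eq:consistency_new_rank1} is trivially satisfied only if $G$ has no edges, so instead I permute: set $\ket{a_{v\alpha}} = \ket{\pi_v(\alpha)}$ for permutations $\pi_v$ chosen so that $\pi_v(1) = \kappa(v)$ — wait, this still needs the edge condition. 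The cleaner route: use the coloring directly to define $f$. Given \emph{any} optimal strategy, define $f(\ket{a_{v\alpha}}) = 1$ iff $\alpha = \kappa(v)$ and $f=0$ off the vectors of the strategy; then each basis $\{\ket{a_{v\alpha}}\}_\alpha$ has exactly one marked vector, and two marked vectors $\ket{a_{v,\kappa(v)}}, \ket{a_{w,\kappa(w)}}$ with $(v,w)\in E$ would need $\kappa(v)=\kappa(w)$ to be forced orthogonal by \eqref{eq:consistency_new_rank1} — but $\kappa$ proper means $\kappa(v)\neq\kappa(w)$, so those two vectors need not be orthogonal; one must be careful that $f$ is well-defined (a vector could appear as $\ket{a_{v\alpha}}$ and $\ket{a_{w\beta}}$ with $\alpha\neq\beta$), and this well-definedness is exactly where the argument could require choosing the strategy rather than taking an arbitrary one, so I will phrase ($\Leftarrow$) contrapositively as: if some optimal rank-$1$ strategy is not a weak KS set, produce from its witnessing $f$ a proper $c$-coloring, giving $\chi(G)\le c$, hence $\chi(G)=c$.

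For the direction ($\Rightarrow$): assume $c = \chi_q^{(1)}(G) < \chi(G)$ and let $\{\ket{a_{v\alpha}}\}$ be \emph{any} optimal rank-$1$ strategy; I must show it is a weak KS set. Suppose not — then there is $f:\C^c\to\{0,1\}$ with $\sum_{u\in b} f(u) = 1$ for every orthonormal basis $b\subseteq S$, and with no two orthogonal $u,v\in S$ both marked. Each $\{\ket{a_{v\alpha}}\}_{\alpha\in[c]}$ is such a basis, so for each $v$ there is a unique color $\kappa(v)\in[c]$ with $f(\ket{a_{v,\kappa(v)}}) = 1$. If $(v,w)\in E$, then $\inp{a_{v,\kappa(v)}}{a_{w,\kappa(w)}}$: if $\kappa(v) = \kappa(w) =: \alpha$, condition \eqref{eq:consistency_new_rank1} gives $\inp{a_{v\alpha}}{a_{w\alpha}} = 0$, so $\ket{a_{v\alpha}}$ and $\ket{a_{w\alpha}}$ are orthogonal vectors in $S$ both marked by $f$ — contradicting the ``no two orthogonal marked'' property (these lie in two distinct bases, as the definition of weak KS set requires). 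Hence $\kappa(v)\neq\kappa(w)$, so $\kappa$ is a proper $c$-coloring and $\chi(G)\le c = \chi_q^{(1)}(G)$, contradicting $\chi_q^{(1)}(G) < \chi(G)$. Therefore $S$ is a weak KS set.

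\textbf{Main obstacle.} The delicate point is the interaction between the two clauses in the definition of a weak KS set and the possibility that the strategy's vectors are not all distinct: the definition insists the orthogonal marked pair lie in two distinct orthonormal bases, and I must check that the pair $\ket{a_{v,\kappa(v)}}, \ket{a_{w,\kappa(w)}}$ produced above genuinely comes from the bases at $v$ and $w$ (which are distinct as indexed families even if some vectors coincide). I also need to confirm that in the ($\Leftarrow$) direction the derived coloring is well-defined as a function on $V$ — which it is, since $\kappa(v)$ is read off from the basis at $v$, independently of what happens at other vertices — so the only real subtlety is formalizing ``the strategy is a weak KS set'' for the \emph{specific optimal strategy under consideration}, which the statement already quantifies over (``for all optimal rank-$1$ strategies''), making the logic line up once the above correspondence is spelled out.
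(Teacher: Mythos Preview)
Your $(\Rightarrow)$ direction is correct and matches the paper's argument exactly: assume some optimal strategy is not a weak KS set, use the witnessing $f$ to read off a color $\kappa(v)$ for each vertex, and use \eqref{eq:consistency_new_rank1} to see that $\kappa$ is proper, contradicting $\chi(G)>c$.

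The $(\Leftarrow)$ direction, however, has a genuine logical gap. You begin correctly by taking the contrapositive: assume $c=\chi_q^{(1)}(G)=\chi(G)$ and exhibit \emph{some} optimal rank-$1$ strategy that is not a weak KS set. You then sketch two attempts (constant computational-basis strategy, then permuted bases), abandon both, and finally declare: ``I will phrase $(\Leftarrow)$ contrapositively as: if some optimal rank-$1$ strategy is not a weak KS set, produce from its witnessing $f$ a proper $c$-coloring.'' But that implication is the contrapositive of $(\Rightarrow)$, not of $(\Leftarrow)$ --- you have simply restated the direction you already proved. Your ``Main obstacle'' paragraph confirms the confusion: you describe the $(\Leftarrow)$ task as ``deriving a coloring,'' which is again the $(\Rightarrow)$ argument. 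As written, $(\Leftarrow)$ is unproven.

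The fix is the idea you nearly had with permutations. Given a proper $c$-coloring $\kappa$, define the strategy $\ket{a_{v,i}}=\ket{\,i+\kappa(v)\bmod c\,}$ (cyclic shifts of the computational basis). For $(v,w)\in E$ we have $\kappa(v)\neq\kappa(w)$, so $\inp{a_{v,i}}{a_{w,i}}=\inp{i+\kappa(v)}{i+\kappa(w)}=0$ and \eqref{eq:consistency_new_rank1} holds; thus this is a valid optimal rank-$1$ quantum $c$-coloring. Its vector set is just the single computational basis $\{\ket{0},\dots,\ket{c-1}\}$, and the function $f(\ket{0})=1$, $f(\ket{j})=0$ for $j\neq 0$ witnesses that this set is not a weak KS set. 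This is exactly the paper's argument, and it is what your permutation sketch was reaching for before you switched directions.
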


\begin{proof}
$\Rightarrow$ Let $c = \chi_q^{(1)}(G) < \chi(G)$ and let $S = \{\ket{a_{v\alpha}}: v\in V, \alpha \in [c]\}$ be the union of the vectors of any optimal winning strategy for Alice. 
Recall that without loss of generality Bob's strategy is the complex conjugate of Alice's (see Proposition \ref{wlog_strategies}).
We now show that if  $S$ \emph{is not} a KS set, then we can properly $c$-color the graph, contradicting the assumption that $\chi(G) > c$.
$S$ \emph{is not} a KS set if there exists a function $f: \C^n \rightarrow \{0,1\} $ with the following 2 properties:
\begin{enumerate}
\item For all orthonormal bases $b \subseteq S$, $\sum_{u\in b} f(u) = 1$ 
\item For all orthogonal unit vectors $u,v \in S$ we have $f(u)=0 $ or $f(v)=0$. 
\end{enumerate}
We can use the function $f$ defined above to $c$-color the graph as follows: $$\mbox{color}(v) = \alpha \mbox{ if } f(\ket{a_{v\alpha}}) = 1.$$
This is a proper $c$-coloring because:
\begin{enumerate}
 \item The rank-$1$ quantum coloring associates each vertex to an orthonormal basis, and the first property of $f$ guarantees that exactly one vector per basis has label 1.
 \item The second property of $f$ and the consistency condition \eqref{eq:consistency_new_rank1} ensure that we never color adjacent vertices with the same color.
\end{enumerate}

$\Leftarrow$ Let $\chi_q^{(1)}(G) = c$ and assume that for all optimal rank-$1$ strategies for the quantum coloring game, 
$\bigcup_{v\in V} \{ \ket{a_{v\alpha}} \}_{\alpha \in [c]} $ is a weak KS set.
Now suppose that it is possible to classically $c$-color the graph. 
Then for each $v\in V$ with classical color $\alpha$, define the projective measurement 
\mbox{$\{\ketbra{i+\alpha}{i+\alpha}\}_{i\in[c]}$} (where the addition is modulo $c$).
It is easy to see that this is a valid rank-$1$ quantum coloring with $c$ colors, 
and the union of its vectors consists of the computational basis only. Thus it is not a weak KS set,
because you can define a function that maps $\ket{1}$ to $1$ and all other vectors to $0$. 
This contradicts the assumption.
\end{proof}

A Hadamard graph $G_N=(V,E)$ is the graph with vertex set $V=\01^N$ and 
edge set $E=\{(u,v) \in V\times V : d(u,v)=N/2 \}$, where $d$ is the Hamming distance.
Avis \emph{et al.} \cite[Theorem 3.3]{Avis_QPW} prove that for all Hadamard graphs
$G_N$ with $N=4m$ and $m\geq 3$, $\chi_q^{(1)}(G_N) < \chi(G_N)$.
They explicitly exhibit the rank-1 winning strategies for the coloring game on such graphs. 
From Theorem \ref{thm:chiq_chi}, it follows that each of those strategy must be a weak Kochen-Specker set, and
it is shown in \cite{KS-PT} that every weak KS set of size $k$ induces a KS set in the same dimension $d$ with $O(k^2 d)$ elements.
Combining these points to our Theorem \ref{thm:chiq_chi}, we have proved the following.
\begin{corollary}
Every Hadamard graph $G_N$ with $N=4m$ and $m\geq 3$ induces a Kochen-Specker set.
\end{corollary}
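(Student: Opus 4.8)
The plan is to derive the corollary as a short chain of implications combining three ingredients: the rank-$1$ separation $\chi_q^{(1)}(G_N)<\chi(G_N)$ established for Hadamard graphs by Avis \emph{et al.}, our Theorem~\ref{thm:chiq_chi}, and the conversion of weak Kochen--Specker sets into genuine ones from \cite{KS-PT}.

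First I would fix $N=4m$ with $m\ge 3$ and invoke \cite[Theorem~3.3]{Avis_QPW} to get $\chi_q^{(1)}(G_N)<\chi(G_N)$; write $c=\chi_q^{(1)}(G_N)$, which is finite since $c\le\chi(G_N)$. By definition of the rank-$1$ quantum chromatic number there exists an optimal rank-$1$ quantum $c$-coloring of $G_N$ (indeed Avis \emph{et al.} exhibit one explicitly), and I would let $S=\bigcup_{v\in V(G_N)}\{\ket{a_{v\alpha}}\}_{\alpha\in[c]}\subseteq\C^{c}$ be the set of unit vectors appearing in Alice's half of this strategy.

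Next I would apply the ``$\Rightarrow$'' direction of Theorem~\ref{thm:chiq_chi}: since $c=\chi_q^{(1)}(G_N)<\chi(G_N)$ and the chosen strategy attains the optimum, the theorem says that $S$ is a \emph{weak} KS set in $\C^{c}$. Finally I would feed $S$ into the fact recorded just before the statement of the corollary, namely that a weak KS set of size $k$ in $\C^{d}$ yields an honest KS set in $\C^{d}$ with $O(k^{2}d)$ vectors \cite{KS-PT}; with $d=c$ and $k=|S|\le 2^{N}c$ this produces the Kochen--Specker set that $G_N$ is claimed to induce. Moreover, since $c=\chi_q^{(1)}(G_N)\ge\xi(G_N)$ is unbounded over the family, letting $m$ grow yields KS sets of growing dimension, as announced in the abstract.

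I do not expect a real obstacle: both external results are used as black boxes and the remainder is bookkeeping. The two points deserving a moment's care are (i) that the strategy fed into Theorem~\ref{thm:chiq_chi} must use exactly $\chi_q^{(1)}(G_N)$ colours --- this is precisely what ``optimal'' means, and it is what licenses the choice of $S$ above; and (ii) that the transformation of \cite{KS-PT} sends a weak KS set of the special shape coming from a rank-$1$ coloring (a union of orthonormal bases of $\C^{c}$) to an \emph{actual} KS set rather than merely to another weak one --- but that is exactly the content of the cited lemma, so the appeal is legitimate.
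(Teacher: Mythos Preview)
Your proposal is correct and follows essentially the same route as the paper: invoke Avis \emph{et al.}\ for the separation $\chi_q^{(1)}(G_N)<\chi(G_N)$ and an explicit optimal rank-$1$ strategy, apply Theorem~\ref{thm:chiq_chi} to conclude that the strategy vectors form a weak KS set, and then upgrade to a genuine KS set via \cite{KS-PT}. Your attention to the optimality hypothesis in Theorem~\ref{thm:chiq_chi} and the remark on growing dimension are welcome refinements, but the logical skeleton is the same as the paper's.
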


\section{Conclusions} \label{sec:concl}

Our results clarify the relation between rank-1 quantum
chromatic number and various combinatorial parameters. In particular, we have
given a necessary and sufficient condition for a separation between rank-1
quantum chromatic number and chromatic number, therefore answering an open
question in \cite{qchrom}. Our
finding makes use of the structure of Kochen-Specker sets; it establishes a
new link between these objects and quantum coloring games in the graph-theoretic context. 

Our work suggests a number of directions and open problems:

\begin{itemize}
\item The relation between KS sets and rank-1 quantum chromatic number is not
a direct one. We still do not know a method to construct a graph $G$ such
that $\chi_{q}^{(1)}(G)<\chi\left(  G\right)  $, given the KS set. A seemingly
natural approach is to construct $G$ by performing some operations on the
orthogonality graph of the KS set, for example, by looking at the clique
complex (where each vertex corresponds to an orthonormal basis). 

\item So far, the only cases in which the single-shot Shannon capacity of a
graph, which actually corresponds to the independence number, has been proved
to be strictly smaller than the single-shot entanglement-assisted capacity,
consist of orthogonality graphs of KS sets \cite{cldww_zeroerror,dsw_zeroerror}. 
It is plausible to conjecture that the rank-1 quantum
chromatic number of graphs obtained from such orthogonality structures is
related to the single-shot entanglement-assisted capacity. It is an open
problem to uncover the link. 

\item Graph products have a fundamental role in combinatorial optimization
\cite{imrich}. Notably, the Shannon capacity of a graph is defined as an
asymptotic parameter of a graph product (namely, the strong product). We have
shown that the rank-1 quantum chromatic number can be computed by looking at
the orthogonal rank of a Cartesian product. The problem of finding 
and characterizing families of graphs for which $\chi_{q}^{(1)}(G)$ 
has extremal behavior can be then approached by studying Cartesian products. Once
read in this light, our results also provide a motivation to apply minimum
rank methods (see \cite{Hogben_survey}) in the context of quantum nonlocality.

\item The complexity of computing or approximating the quantum chromatic number 
of a graph is an important open problem. The relation with \emph{minimum semidefinite rank}
problems highlighted in Section \ref{sec:eq_chiq1_orth} may be an interesting starting point 
for future work on this question.
\end{itemize}

\paragraph{Acknowledgments.}

The authors thank Ronald de Wolf for useful discussions and for reading preliminary versions of the paper. 
We thank Monique Laurent for an important observation leading to Theorem \ref{thm:chiq_chi_xi} and 
we thank Leslie Hogben, Jop Bri\"et and Antonios Varvitsiotis for useful discussions.
We are also very grateful to Laura Man\v{c}inska and M\={a}ris Ozols for
pointing out that there is an orthogonality graph of a set of 13
vectors of dimension 3 with rank-1 quantum chromatic number and
chromatic number equal to 4.

\bibliographystyle{alpha}

\newcommand{\etalchar}[1]{$^{#1}$}

\end{document}